\newcommand{\indep}{\perp \!\!\! \perp}
\newcommand{\uni}[2]{\mathrm{Uni}({#1{:}#2})}
\newcommand{\rd}[2]{\mathrm{Red}({#1{:}#2})}
\newcommand{\syn}[2]{\mathrm{Syn}({#1{:}#2})}
\newcommand{\SP}{\ensuremath{\mut{Z}{\hat{Y}}}}
\newcommand{\EO}{\ensuremath{\mut{Z}{\hat{Y} | Y}}}
\newcommand{\PP}{\ensuremath{\mut{Z}{Y | \hat{Y}}}}
\newcommand{\uniZ}{\ensuremath{\uni{Z}{\hat{Y} | Y}}}
\newcommand{\redZ}{\ensuremath{\rd{Z}{\hat{Y}, Y}}}
\newcommand{\synZ}{\ensuremath{\syn{Z}{\hat{Y}, Y}}}
\newcommand{\uniZY}{\ensuremath{\uni{Z}{Y | \hat{Y}}}}
\let\svthefootnote\thefootnote
\newcommand\freefootnote[1]{%
  \let\thefootnote\relax%
  \footnotetext{#1}%
  \let\thefootnote\svthefootnote%
}
\newcommand{\mut}[2]{\mathrm{I}({#1;#2})}
\newcommand{\mutd}[3]{\mathrm{I}_{#1}({#2;#3})}
\newcommand{\iid}[0]{i.i.d.}
\newcommand{\Yh}{\hat{Y}}
\newtheorem{proposition}{Proposition}
\newtheorem{definition}{Definition}
\newtheorem{example}{Example}
\begin{document}
\title{A Unified View of Group Fairness Tradeoffs \\ Using Partial Information Decomposition}

\author{Faisal Hamman and Sanghamitra Dutta \\
\normalsize University of Maryland College Park
}

\date{}
\maketitle


\begin{abstract}
This paper introduces a novel information-theoretic perspective on the relationship between prominent group fairness notions in machine learning, namely statistical parity, equalized odds, and predictive parity. It is well known that simultaneous satisfiability of these three fairness notions is usually impossible, motivating practitioners to resort to approximate fairness solutions rather than stringent satisfiability of these definitions. However, a comprehensive analysis of their interrelations, particularly when they are not exactly satisfied, remains largely unexplored. Our main contribution lies in elucidating an exact relationship between these three measures of (un)fairness by leveraging a body of work in information theory called partial information decomposition (PID).  In this work, we leverage PID to identify the granular regions where these three measures of (un)fairness overlap and where they disagree with each other leading to potential tradeoffs. We also include numerical simulations to complement our results.
\end{abstract}

\section{Introduction}
\freefootnote{Presented at the \textit{IEEE International Symposium on Information Theory} (ISIT 2024) in Athens, Greece.}
\freefootnote{The authors are with the Department of Electrical and Computer Engineering, University of Maryland College Park, United States 20742. Author contacts: \textit{fhamman@umd.edu}, \textit{sanghamd@umd.edu}}
\freefootnote{This work was supported in part by NSF CAREER Award 2340006.}

The increasing adoption of machine learning (ML) in high-stakes applications such as employment, finance, healthcare, etc, promises enhanced efficiency and improved decision-making.  However, this widespread reliance on ML systems has escalated concerns about the disparate impact~\cite{WhiteHouse2022AIBill,newFairSurvey,mehrabi2019survey,barocas-hardt-narayanan,varshney2021trustworthy,kamishima2011fairness,dutta2021fairness,hamman2023can,anthis2024causal} that these systems might cause on unprivileged \emph{groups} based on sensitive attributes such as gender, race, age, etc. Several anti-discrimination laws and ethical principles~\cite{WhiteHouse2022AIBill} are being actively put forth to ensure algorithmic fairness.

Existing literature has studied a plethora of definitions, metrics, and scholarly debates about algorithmic fairness~\cite{newFairSurvey,mehrabi2019survey}. Central to the debate of quantifying fairness at a group level are three popular definitions, namely, statistical parity, equalized odds, and predictive parity~\cite{newFairSurvey,hardt2016equality,mehrabi2019survey}. Due to the multitude of fairness definitions available, it is often unclear which measure of fairness is most appropriate to adopt in a given setting~\cite{washington2018argue}. Furthermore, it is also well-known that simultaneous satisfiability of these three fairness definitions is generally impossible~\cite{kleinberg2016inherent, chouldechova2017fair, Barocas2016BigDD}. 

Given such a fundamental impossibility,  practitioners often strive for approximate fairness solutions rather than stringent satisfiability of all these definitions. Such approximate fairness solutions consist of two pivotal aspects: (i) quantification
of (un)fairness (i.e., a gap from exact satisfiability); and (ii) development
of strategies to mitigate such unfairness in ML models. For instance, one may jointly minimize one or more measures of unfairness while training an ML model which has often led to empirical tradeoffs between accuracy and different measures of unfairness~\cite{dutta2020chernoff,kim2020fact}. 

Although previous studies have identified certain impossibilities among these fairness notions~\cite{chouldechova2017fair, kleinberg2016inherent, Barocas2016BigDD}, a detailed analysis focusing on \emph{the interrelationships among different measures of unfairness}, specifically explaining when they will be in agreement and when they will be in disagreement leading to potential tradeoffs has received limited attention.

Our research bridges this gap by leveraging Partial Information Decomposition (PID)~\cite{bertschinger_QUI}, a body of work in information theory, to elucidate the exact relationship between different measures of unfairness. In particular, we consider information-theoretic quantifications~\cite{ghassami2018fairness} of the respective gaps from statistical parity, equalized odds, and predictive parity as our measures of unfairness. Using PID, we demonstrate the exact relationship between these three measures of unfairness in Proposition~\ref{prop:decomposition}. We also refer to Fig.~\ref{fig:chainrule} and Fig.~\ref{fig:circle} for a pictorial illustration of the relationship between the measures of unfairness. 


PID enables us to provide a unified information-theoretic framework that is instrumental in establishing the fundamental limits and tradeoffs among these unfairness measures, particularly in the context of approximate fairness solutions when exact satisfiability of all three fairness definitions is not met. Furthermore, the impossibility among the three fairness definitions can also be derived from our result (see Theorem~\ref{thm:imp}). We also identify and delineate the regions of agreement and disagreement among these three measures of unfairness (see Section~\ref{sec:fairdecomp}), providing insights on when there will be a tradeoff and when there will be no tradeoff among the measures of unfairness.  We perform numerical simulations on the Adult dataset~\cite{Adult} to complement our theoretical results. Moreover, our work holds broader implications in fields such as algorithmic fairness auditing~\cite{hamman2023can}, where it can significantly contribute to the evaluation of fairness in ML models.





\begin{figure}
\centering
\includegraphics[width=0.7\textwidth]{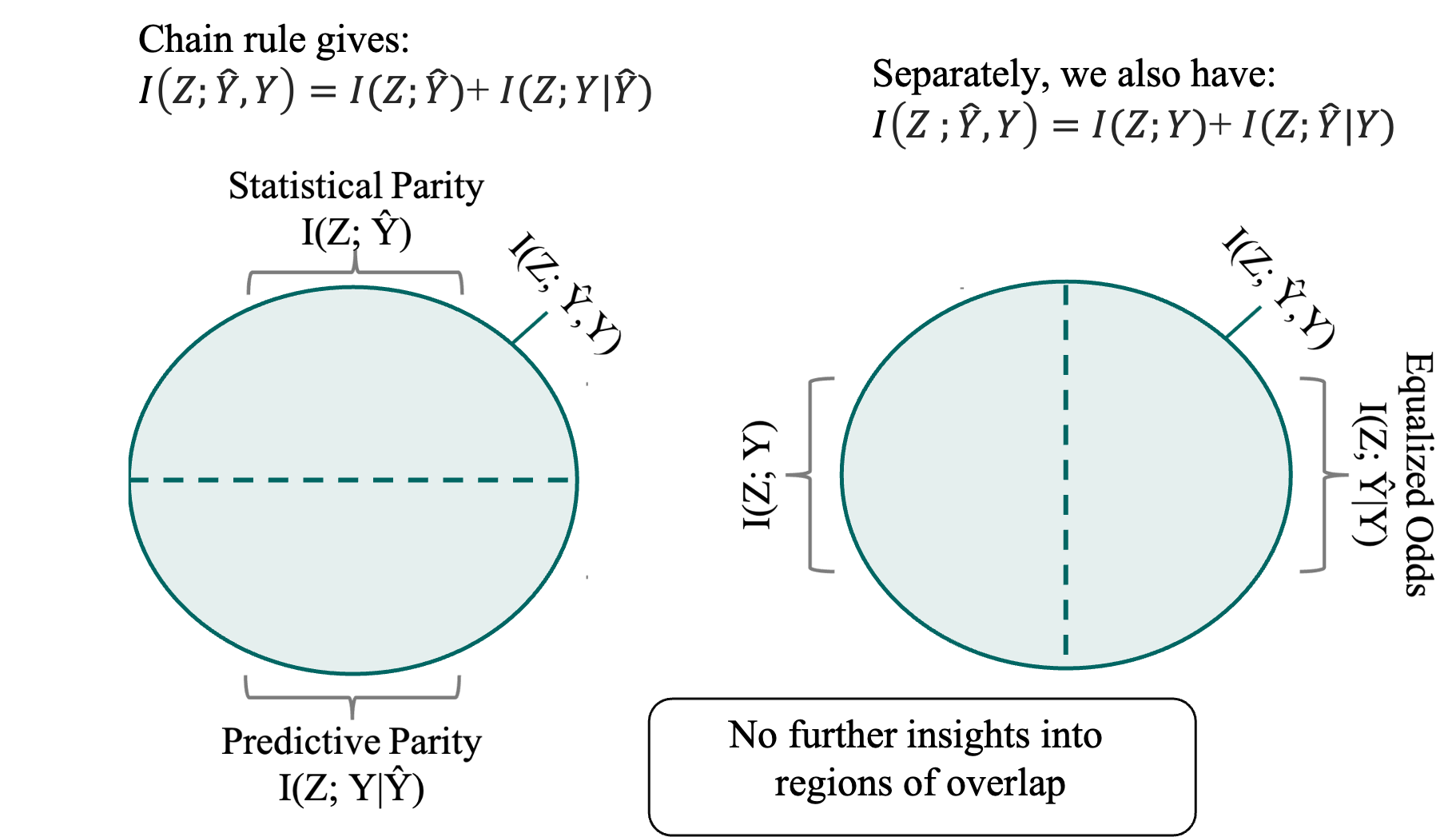}
  \caption{
Illustrates the decomposition of mutual information $\mut{Z}{\hat{Y},Y}$ using the chain rule. (\emph{left}) shows the decomposition into Statistical Parity and Predictive Parity. (\emph{right}) shows the decomposition into $\mut{Z}{Y}$ and Equalized Odds. No further insights into the overlapping regions of these measures, highlighting the need for measures to capture the nuanced interactions between fairness measures.  \label{fig:chainrule}}
\end{figure}

\subsection{Related Works}

Existing literature has introduced several foundational definitions and methods for fairness in machine learning (see some comprehensive overviews and surveys~\cite{barocas-hardt-narayanan,varshney2021trustworthy, mehrabi2019survey, newFairSurvey}). Another line of work focuses on exploring trade-offs between fairness and accuracy~\cite{wang2023aleatoric,dutta2020chernoff,zhao2022inherent,chen2018my,kim2020fact,long2023individual,zhong2024intrinsic,hamman2023demystifying,venkatesh2021can,sabato2020bounding,menon2017cost,liu2022accuracy,hertweck2022gradual,hsu2022pushing}. 

Early works on impossibility results among different fairness measures highlight the challenges of simultaneously satisfying multiple group fairness criteria~\cite{chouldechova2017fair, kleinberg2016inherent, NIPS2016_9d268236, pleiss2017fairness,Barocas2016BigDD}. Notably, \cite{bell2023possibility} challenges the practical implications of the impossibility theorem, showing that fairness across multiple criteria is more achievable than previously believed. Similarly,~\cite{hsu2022pushing} presents an integer-programming-based framework for optimizing post-processing methods to simultaneously satisfy multiple fairness criteria under small violations while maintaining a minimal reduction in model performance. However, the nuanced interrelationships among different
measures of unfairness, specifically explaining when they will be in agreement and when they will be in disagreement leading
to potential tradeoffs has received limited attention.

Information-theoretic measures have been used to quantify group fairness in existing literature \cite{kamishima2012fairness,calmon2017optimized,ghassami2018fairness,dutta2021fairness,dutta2020information,cho2020fair,baharlouei2019r,grari2019fairness,10.1109/ISIT45174.2021.9517723,galhotra2022causal,NEURIPS2022_fd5013ea,kairouz2022generating}. For instance, \cite{ghassami2018fairness,dutta2021fairness} have already used mutual information and conditional mutual information to quantify statistical parity, equalized odds, and predictive parity. Closely related to our work,~\cite{hertweck2022gradual} explores whether fairness measures can be gradually compatible using these information-theoretic quantifications and chain rule, also showing that some fairness criteria can be simultaneously improved through fairness-regularized predictors. Our work dives deeper into this nuanced tradeoff among the three group fairness measures leveraging a body of work in information theory called Partial Information Decomposition (PID) that goes beyond chain rule, delineating regions of agreement and disagreement.

PID is recently gaining traction across various ML applications~\cite{dutta2023review,dutta2020information,dutta2021fairness,hamman2023demystifying,tax2017partial,liang2023quantifying,mohamadi2023more,pakman2021estimating,halder2024quantifying,venkatesh2023capturing}.
It is particularly noteworthy in the realm of algorithmic fairness~\cite{dutta2023review,dutta2020information,dutta2021fairness,hamman2023demystifying}. Prior work~\cite{dutta2020information,dutta2021fairness} leverages PID to dissect total disparity in decision-making into exempt and non-exempt components depending on which features they came from. Another work~\cite{hamman2023demystifying} leverages PID to analyze the trade-offs between global and local fairness in a federated learning setting, identifying three sources of unfairness, and formulates a convex optimization problem to define the theoretical limits of accuracy and fairness trade-offs. We also refer to \cite{dutta2023review} for a survey of PID in fairness and explainability. Understanding tradeoffs and agreement disagreement between the three canonical group fairness measures, namely statistical parity, equalized odds, and predictive parity, using
PID has not been studied.
We aim to develop a unified information-theoretic framework that effectively formalizes the fundamental limits and trade-offs among these three unfairness measures.

\section{Preliminaries}\label{prel}

Let $X$ denote the input features, $Z$ denote the sensitive attribute, and $Y$ denote the true label. The sensitive attribute $Z$ is assumed to be binary with $1$ indicating the privileged group and $0$ indicating the unprivileged group. We also let $\hat{Y}$ represent the predictions of a model, i.e., $\hat{Y}= f_\theta(X)$ where the model is parameterized by $\theta$. Standard machine learning aims to minimize the empirical risk: 
\begin{equation}
    \min_\theta L(\theta) = \min_\theta \frac{1}{n} \sum_{i=1}^{n} l(f_\theta(x_i),y_i),
\end{equation} where $l(\cdot,\cdot)$ is a predefined loss function, $x_i$ is the input feature, $y_i \in \{0,1\}$ is the true label, and $n$ is the number of datapoints in the dataset.

\subsection{Background on Partial Information Decomposition}

\begin{figure}
  \centering
\includegraphics[width=0.3\textwidth]{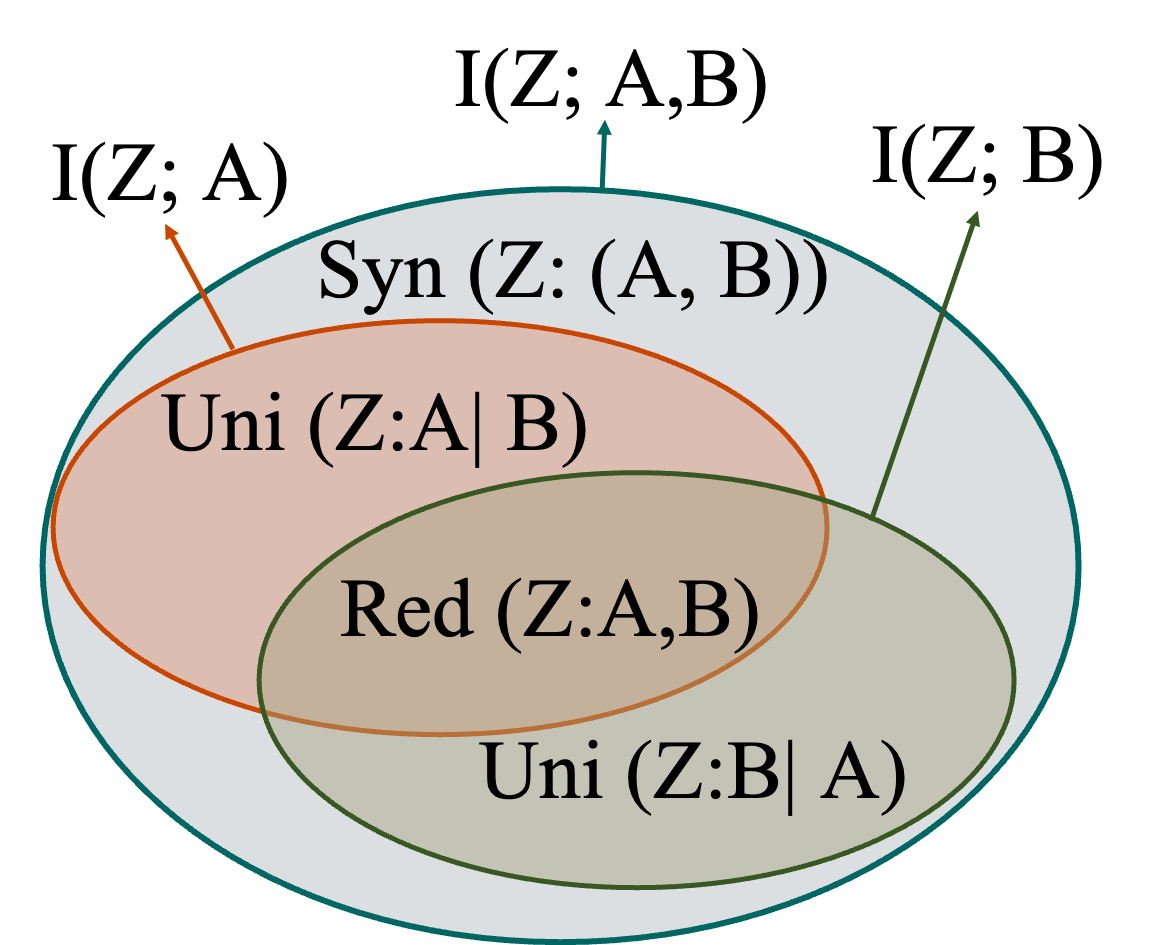}
  \caption{Venn diagram showing PID of $\mut{Z}{A,B}$.
  \label{fig:sim}}
\end{figure}

Partial Information Decomposition (PID)~\cite{bertschinger_QUI} 
decomposes the total mutual information about a random variable Z contained in the tuple $(A, B)$, i.e., $\mathrm{I}(Z; A, B)$  into four \emph{nonnegative} terms as follows (also see Fig.~\ref{fig:sim}):
\begin{align}
\mathrm{I}(Z; A, B)  = \uni{Z}{A|B}  +\uni{Z}{B|A}\label{eq:pid} \\+\rd{Z}{A,B} + \syn{Z}{A,B} \nonumber 
\end{align}
Here, $\uni{Z}{A|B}$ denotes the unique information about $Z$ that is present only in $A$ and not in $B$. E.g., \textit{shopping preferences} ($A$) may provide unique information about \textit{gender} ($Z$) that is not present in \textit{address} ($B$). $\text{Red}(Z{:} A, B)$ denotes the redundant information about $Z$ that is present in both $A$ and $B$. E.g., \textit{zipcode} ($A$) and \textit{county} ($B$) may provide redundant information about \textit{race} ($Z$). The term $\syn{Z}{A,B}$ denotes the synergistic information not present
in either $A$ or $B$ individually, but present jointly in $(A, B)$, e.g., each individual digit of the \textit{zipcode} may not have information about \textit{race} but together they provide significant information. Before formally defining these terms, we provide an example.

\noindent \textbf{Motivational Example.} Let $Z{=}(Z_1,Z_2,Z_3)$ with each $Z_i{\sim}$ \iid{} Bern(1/2). Let $A=(Z_1,Z_2,Z_3\oplus N)$, $B=(Z_2,N)$, and $N\sim $  Bern(1/2) which is independent of $Z$. Here,  $\mathrm{I}(Z;A,B)=3$ bits. The unique information about $Z$ that is contained only in $A$ and not in $B$ is effectively in $Z_1$, and is given by $\uni{Z}{A| B} = \mut{Z}{Z_1} = 1$ bit. The redundant information about $Z$ that is contained in both $A$ and $B$ is effectively in $Z_2$ and is given by $\rd{Z}{A, B}=\mathrm{I}(Z;Z_2)=1$ bit. Lastly, the synergistic information about $Z$ that is not contained in either $A$ or $B$ alone, but is contained in both of them together is effectively in the tuple $(Z_3\oplus N,N)$, and is given by $\syn{Z}{A,B} {=} \mut{Z}{(Z_3\oplus N,N)}=1 $ bit. This accounts for the $3$ bits in $\mut{Z}{A,B}$. 

We also note that defining any one of the PID terms suffices in obtaining the others. This is because of another relationship among the PID terms as follows~\cite{bertschinger_QUI}: $\mut{Z}{A}=\uni{Z}{A|B} + \rd{Z}{A, B}$.  Essentially $\rd{Z}{A, B}$ is viewed as the sub-volume between $\mut{Z}{A}$ and $\mut{Z}{B}$ (see Fig.~\ref{fig:sim}). Hence,  $\rd{Z}{A, B} = \mut{Z}{A}- \uni{Z}{A|B}$. Lastly, $\syn{Z}{A,B} =  \mathrm{I}(Z; A, B) - \uni{Z}{A|B} -\uni{Z}{B|A}-\rd{Z}{A, B}$ (can be obtained from \eqref{eq:pid} once both unique and redundant information has been defined). The main results of our paper hold regardless of the specific 
 definition of a given PID term. However, our experiments are based on the precise definition of $\uni{Z}{A|B}$ from \cite{bertschinger_QUI}.


\begin{definition}[Unique Information~\cite{bertschinger_QUI}]\label{def:bert_def} Let $\Delta$ be the set of all joint distributions on $(Z, A, B)$ and $\Delta_p$ be the set of joint distributions with the same marginals on $(Z, A)$ and $(Z, B)$ as the true distribution, i.e., $\Delta_p=\{Q {\in} \Delta$ : $\Pr_Q(Z=z, A=a) =\Pr(Z=z, A=a)$ and $\Pr_Q(Z=z,B=b)=\operatorname{Pr}(Z=z, B=b)\}$. Then, 
\begin{equation}\uni{Z}{A|B}=\min _{Q \in \Delta_p} \mathrm{I}_Q(Z ; A | B),
\end{equation}
where $\mathrm{I}_Q(Z ; A | B)$ is the conditional mutual information when $(Z, A, B)$ have joint distribution $Q$ and $\Pr_Q(\cdot)$ denotes the probability under $Q$. 
\end{definition}

\noindent \textbf{Operational meaning of Unique Information from Blackwell sufficiency:} Unique information is closely tethered to Blackwell Sufficiency~\cite{blackwell1953equivalent} in statistical decision theory. The concept of Blackwell sufficiency~\cite{blackwell1953equivalent} from statistical decision theory helps characterize if a random variable $A$ is more informative than $B$ about $Z$ (also relates to stochastic degradation of channels~\cite{banerjee2018unique,venkatesh2023capturing}). A channel $P_{B|Z}$ is Blackwell sufficient with respect to another channel $P_{A|Z}$ (also denoted as $B \geq_{Z} A$) if there exists a stochastic transformation $P_{A'|B}$ such that the effective channel from $Z$ to $A'$ is equivalent to the original channel from $Z$ to $A$ (see Fig.~\ref{fig:blackwell}). The unique information $\uni{Z}{A|B}$ is $0$ if and only if $P_{B|Z}$ is Blackwell sufficient with respect to $P_{A|Z}$~\cite{bertschinger_QUI,venkatesh2023capturing,banerjee2018unique}. Otherwise, $\uni{Z}{A|B}>0$, and it is viewed as a departure from Blackwell sufficiency, i.e., \emph{there exists a scenario where $A$ gives something unique about $Z$ that you can never get after degrading to $B$.} 

\begin{figure}[!htbp]
\centering
\includegraphics[width=4cm]{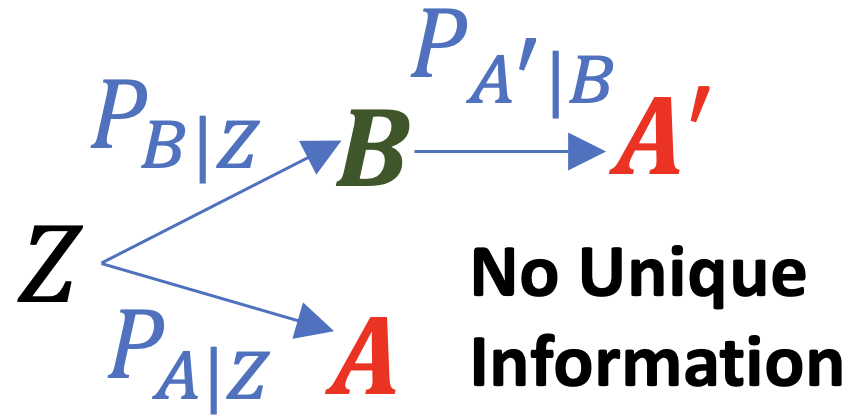}
\caption{Blackwell sufficiency of channel $P_{B|Z}$ with respect to $P_{A|Z}$ means $A$ has no unique information about $Z$ that is not in $B$.\label{fig:blackwell}}
\end{figure}

\section{Decomposition of the Measures of Unfairness}\label{sec:fairdecomp}

We first introduce the information-theoretic quantification corresponding to the three definitions of fairness, namely, statistical parity, equalized odds, and predictive parity. 
 Statistical parity (\textit{independence}), requires the model prediction $\hat{Y}$ to be statistically independent of the sensitive attribute $Z$. Several measures have been proposed to quantify the gap from statistical parity \cite{shi2021survey,mehrabi2019survey} (essentially dependence between $\hat{Y}$ and $Z$). 
In this work, we use the information-theoretic quantification of the statistical parity gap as defined next.

\begin{definition}[Statistical Parity Gap]
The statistical parity gap of a model \( f_\theta \) with respect to \( Z \) is defined as \( I(Z; \hat{Y}) \), the mutual information between \( Z \) and \( \hat{Y} \) (where \( \hat{Y}=f_\theta(X) \)).
\end{definition}

The concept of statistical parity has often been criticized for not considering the true labels. 
A perfect predictor \( \hat{Y} = Y \) might not satisfy this criterion if $Y$ is correlated to the sensitive attribute $Z$. Hence, the concept of equalized odds emerges as an alternative definition of fairness~\cite{hardt2016equality}. Equalized odds (\emph{separation}) require the model's predictions $\hat{Y}$ to be independent of the sensitive attribute $Z$, conditioned on the true label $Y$, i.e.,  \(Z \indep \hat{Y}|Y \).

\begin{definition}[Equalized Odds Gap] \label{def:EO}
The equalized odds gap of a model $f_\theta$ with respect to $Z$ is defined as $\mut{Z}{\hat{Y}|Y}$, the conditional mutual information between $Z$ and $\hat{Y}$ given $Y$.
\end{definition}

Yet another vital fairness measure is predictive parity (\emph{sufficiency}), which focuses on error parity among individuals 
given the same prediction \cite{newFairSurvey}. Predictive parity requires the sensitive attribute $Z$ to be independent of the true label $Y$ conditioned on the model prediction $\hat{Y}$, i.e., \( Z \indep Y|\hat{Y} \). 
\begin{definition}[Predictive Parity Gap] \label{def:PP}
The predictive parity gap of a model $f_\theta$ with respect to $Z$ is defined as $\mut{Z}{Y|\hat{Y}}$, the conditional mutual information between $Z$ and $Y$ given $\hat{Y}$.
\end{definition}


We leverage PID to derive exact relationships among the three measures of unfairness. We decompose the statistical parity gap $I(Z; \hat{Y})$, equalized odds gap $I(Z; \hat{Y}|Y)$, and predictive parity gap $I(Z; Y|\hat{Y})$ into \emph{nonnegative} overlapping terms. The significance of this decomposition is that it highlights regions where these measures are in agreement and disagreement. Fig.~\ref{fig:chainrule} and Fig.~\ref{fig:circle}  provides a pictorial illustration of the overlaps between these three measures of unfairness.

\begin{proposition}\label{prop:decomposition} The statistical parity gap $\mut{Z}{\hat{Y}}$, equalized odds gap $\mut{Z}{\hat{Y}|Y}$, and predictive parity gap $\mut{Z}{Y|\hat{Y}}$ can be decomposed into nonnegative terms as follows:
\begin{align}
&\mut{Z}{\hat{Y}}=\uni{Z}{\hat{Y}|Y} + \rd{Z}{\hat{Y}, Y}. \\
&\mut{Z}{\hat{Y}|Y}=\uni{Z}{\hat{Y}|Y} + \syn{Z}{\hat{Y},Y}. \\
&\mut{Z}{Y|\hat{Y}}=\uni{Z}{Y|\hat{Y}} + \syn{Z}{\hat{Y},Y}. 
\end{align} 
\end{proposition}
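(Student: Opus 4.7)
The plan is to derive all three identities directly from the two algebraic relations recalled in the preliminaries, namely the full PID decomposition
\[
\mut{Z}{A,B} = \uni{Z}{A|B} + \uni{Z}{B|A} + \rd{Z}{A,B} + \syn{Z}{A,B},
\]
and the consistency identity $\mut{Z}{A} = \uni{Z}{A|B} + \rd{Z}{A,B}$, combined with the standard chain rule for mutual information. I instantiate $A = \hat{Y}$ and $B = Y$ throughout. Since the PID terms are nonnegative by construction, once the identities are proved the required nonnegativity is automatic.

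First I would handle the statistical parity gap. This is the easiest: the consistency identity applied to $A=\hat{Y}$, $B=Y$ reads $\mut{Z}{\hat{Y}} = \uni{Z}{\hat{Y}|Y} + \rd{Z}{\hat{Y},Y}$, which is exactly the first line of the proposition.

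Next I would tackle the equalized odds gap. The chain rule for mutual information gives $\mut{Z}{\hat{Y},Y} = \mut{Z}{Y} + \mut{Z}{\hat{Y}|Y}$, hence $\mut{Z}{\hat{Y}|Y} = \mut{Z}{\hat{Y},Y} - \mut{Z}{Y}$. I then substitute the full PID expansion of $\mut{Z}{\hat{Y},Y}$ and the consistency expansion $\mut{Z}{Y} = \uni{Z}{Y|\hat{Y}} + \rd{Z}{\hat{Y},Y}$ (obtained by swapping the roles of $A$ and $B$ in the consistency identity). The unique-in-$Y$ and redundant terms cancel, leaving $\mut{Z}{\hat{Y}|Y} = \uni{Z}{\hat{Y}|Y} + \syn{Z}{\hat{Y},Y}$.

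The predictive parity gap follows by a symmetric computation: using the chain rule in the other order, $\mut{Z}{\hat{Y},Y} = \mut{Z}{\hat{Y}} + \mut{Z}{Y|\hat{Y}}$, so $\mut{Z}{Y|\hat{Y}} = \mut{Z}{\hat{Y},Y} - \mut{Z}{\hat{Y}}$. Substituting the full PID expansion of $\mut{Z}{\hat{Y},Y}$ together with the first identity already proved, the unique-in-$\hat{Y}$ and redundant terms cancel, yielding $\mut{Z}{Y|\hat{Y}} = \uni{Z}{Y|\hat{Y}} + \syn{Z}{\hat{Y},Y}$.

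There is no real technical obstacle here; the proof is purely algebraic manipulation of the PID identities and the mutual information chain rule. The only subtle point worth flagging explicitly is that the argument is \emph{definition-agnostic}: it uses only the consistency relation and the total decomposition of $\mut{Z}{\hat{Y},Y}$, both of which are satisfied by any valid PID, so no specific choice (such as the Bertschinger et al.\ definition) is invoked. Nonnegativity of each summand on the right-hand side is then inherited from the nonnegativity built into the PID framework itself.
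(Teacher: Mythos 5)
Your derivation of the three identities is correct and follows essentially the same route the paper takes: the first line is the consistency relation $\mut{Z}{A}=\uni{Z}{A|B}+\rd{Z}{A,B}$ instantiated at $A=\hat{Y}$, $B=Y$, and the other two follow by subtracting the appropriate consistency relation from the full decomposition of $\mut{Z}{\hat{Y},Y}$ via the chain rule, exactly as the paper's preliminaries set up. The one place you are too quick is the claim that nonnegativity of every summand is ``built into the PID framework itself'' and hence definition-agnostic: nonnegativity of all four PID atoms is \emph{not} automatic for an arbitrary decomposition satisfying the consistency equations (several proposed PID measures violate it), and the paper in fact proves it for the Bertschinger et al.\ definition in the appendix --- the unique and synergy terms are easy, but nonnegativity of $\rd{Z}{\hat{Y},Y}$ requires constructing the distribution $Q_0$ with $\Pr_{Q_0}(z,\hat{y},y)=\Pr(z,\hat{y})\Pr(z,y)/\Pr(z)$, checking $Q_0\in\Delta_p$, and using $\mutd{Q_0}{\hat{Y}}{Y|Z}=0$ to lower-bound the redundancy by $\mutd{Q_0}{\hat{Y}}{Y}\geq 0$. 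Your proof of the identities stands, but the ``nonnegative'' qualifier in the proposition needs that extra argument (or an explicit restriction to a PID definition for which nonnegativity is already established).
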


The term $\uni{Z}{\hat{Y}|Y}$  quantifies the unique information about the sensitive attribute $Z$ in the model prediction $\hat{Y}$ that is not there in the true label $Y$.  $\uni{Z}{\hat{Y}|Y}$  is the common region between the statistical parity gap and the equalized odds gap, highlighting the region where they overlap. The term $\rd{Z}{\hat{Y}, Y}$ quantifies the information about sensitive attribute $Z$ that is common between prediction $\hat{Y}$ and true label $Y$. $\rd{Z}{\hat{Y}, Y}$ contributes only to the statistical parity gap $\mut{Z}{\hat{Y}}$ and not to any other measure of unfairness. The term $\syn{Z}{\hat{Y},Y}$ represents the synergistic information about sensitive attribute $Z$ that is \emph{not} present in either $\hat{Y}$ or $Y$ individually but is present jointly in $(\hat{Y}, S)$. $\syn{Z}{\hat{Y},Y}$ is the common region between equalized odds gap and predictive parity gap, highlighting their region of agreement. The unique information $\uni{Z}{Y|\hat{Y}}$  contributes exclusively to the predictive parity gap \( I(Z; Y|\hat{Y}) \). This decomposition delineates the distinct regions where these unfairness measures overlap and diverge, offering a nuanced perspective on the interplay in machine learning models.


To better illustrate this decomposition, we now provide examples to understand each of these regions separately. Consider a hiring scenario featuring binary sensitive attributes and true labels i.e., $\hat{Y}, Z, Y \in \{0,1\}$ with $Z{\sim}$ Bern(1/2).


\begin{figure}
  \centering
\includegraphics[width=0.4\textwidth]{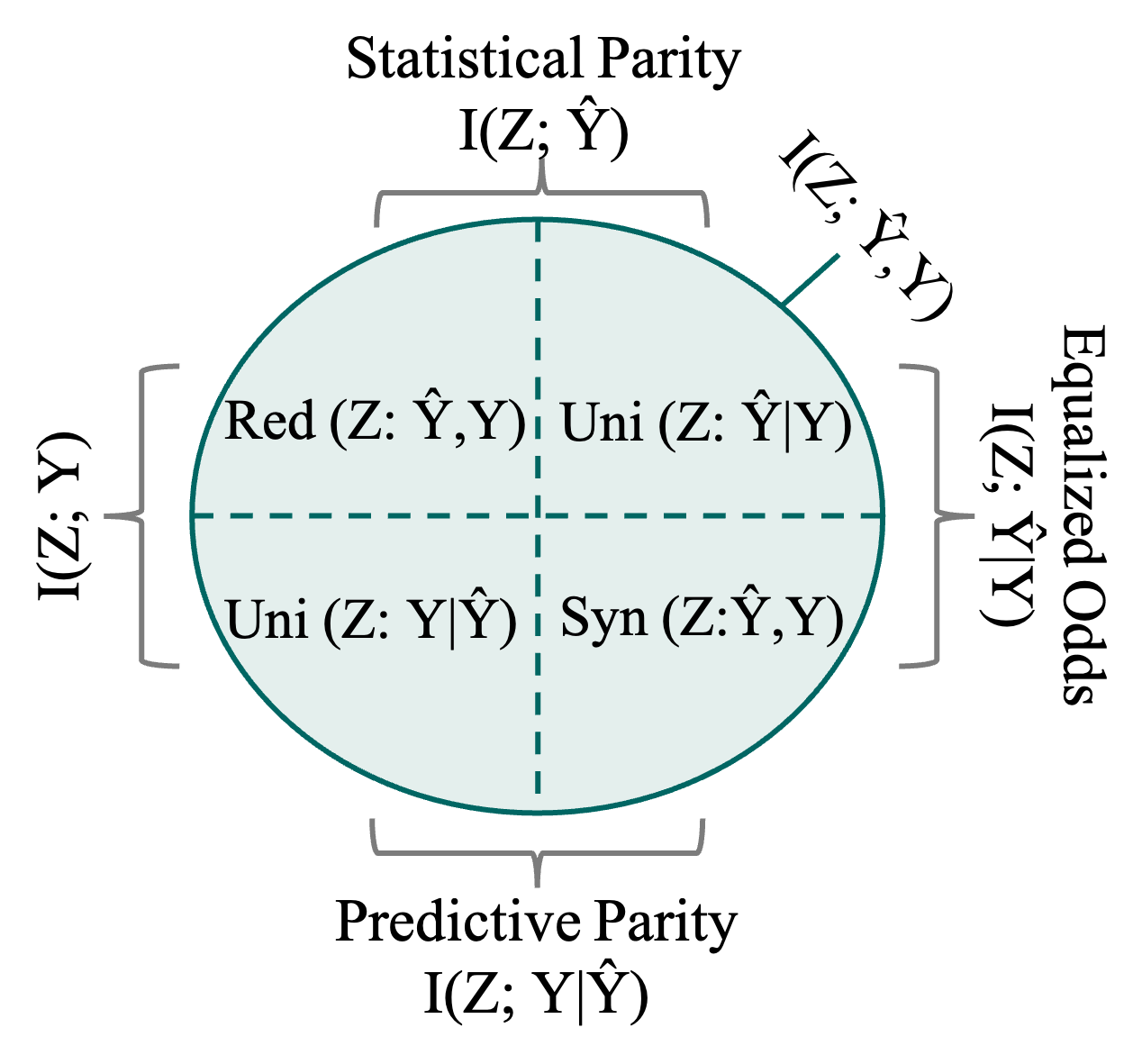}
  \caption{Venn diagram showing the exact relationship between the various unfairness measures using PID: A critical observation is that all four PID terms are nonnegative. This enables us to derive several fundamental limits and tradeoffs among the unfairness measures, providing a nuanced understanding of when they agree and disagree.
  \label{fig:circle}}
  \vspace{-5pt}
\end{figure}

\begin{example}[Pure Uniqueness to Model Prediction]
Let $\hat{Y}=Z$ and $Z \indep Y$ (an equal base rate for privileged and unprivileged groups). Suppose, the model only approves privileged candidates ($Z=1$) but rejects the unprivileged ($Z=0$).  This model violates both statistical parity and equalized odds, i.e., $\mut{Z}{\hat{Y}}=\mut{Z}{\hat{Y}|Y}=1$. This model satisfies predictive parity criterion, $\mut{Z}{Y|\hat{Y}}=0$. This is a case of purely unique information in the model prediction that is not in the true label since all the information about $Z$ is derived exclusively from the model predictions; the true label $Y$ does not correlate with $Z$.  Here, $\uni{Z}{\hat{Y}|Y}=1$, $\rd{Z}{\hat{Y},Y}=0$, $\syn{Z}{\hat{Y},Y}=0$, and $\uni{Z}{Y|\hat{Y}}=0$.
\end{example}

\begin{example}[Pure Redundancy]\label{exp:pureRED}
Let $\hat{Y}=Y$ and $Y=Z$ with probability $0.9$. There is a correlation between the true label $Y$ and protected attribute $Z$, but this model has perfect accuracy. Such a model satisfies equalized odds and predictive parity criterion, i.e., $\mut{Z}{\hat{Y}|Y}=\mut{Z}{Y|\hat{Y}}=0$. However, the model fails to satisfy statistical parity since $\mut{Z}{\hat{Y}}=0.53.$  This is a case of purely redundant information since the information about $Z$ is entirely common between both $\hat{Y}$ and $Y$. Here, $\uni{Z}{\hat{Y}|Y}=0$, $\rd{Z}{\hat{Y},Y}=0.53$,  $\syn{Z}{\hat{Y},Y}=0$, and $\uni{Z}{Y|\hat{Y}}=0$.

\end{example}

\begin{example}[Pure Synergy]\label{exp:syn}
Let $\hat{Y}=Z$ XNOR $Y$ and $Z \indep Y$. The model approves candidates from the privileged group ($Z=1$) with true label $Y=1$, and also from the unprivileged group ($Z=0$) with $Y=0$. On the other hand, it rejects candidates from the unprivileged group ($Z=0$) with true label $Y=1$, and the privileged group ($Z=1$) with true label $Y=0$. Such a model violates equalized odds (and predictive parity) as it singularly prefers one group within each true label class. Thus, $\mut{Z}{\hat{Y}|Y}=1$, and $\mut{Z}{Y|\hat{Y}}=1$. However, it achieves statistical parity since it maintains an equal approval rate for both privileged and unprivileged groups with $\mut{Z}{\hat{Y}}=0$.  This is a case of synergistic information about $Z$ that is not observable in either $\hat{Y}$ or $Y$ individually but is present jointly in $Y,\hat{Y}$. Here, $\uni{Z}{\hat{Y}|Y}=0$, $\rd{Z}{\hat{Y},Y}=0$,  $\syn{Z}{\hat{Y},Y}=1$, and $\uni{Z}{Y|\hat{Y}}=0$.
\end{example}

\begin{example}[Pure Uniqueness to True Label]
Let $Y=Z$ with probability $0.9$ and $Z \indep \hat{Y}$. The true label $Y$ is highly correlated to sensitive attribute $Z$, but the model prediction $\hat{Y}$ is independent of sensitive attribute $Z$.  This model violates predictive parity ($\mut{Z}{Y|\hat{Y}}=0.53$) but satisfies statistical parity and equalized odds ($\mut{Z}{\hat{Y}} =\EO=0$). This is a case of unique information about sensitive attributes in the true label that is not in the model prediction. Here, $\uni{Z}{\hat{Y}|Y}=0$, $\rd{Z}{\hat{Y},Y}=0$, $\syn{Z}{\hat{Y},Y}=0$, and $\uni{Z}{Y|\hat{Y}}=0.53$.
\end{example}

These examples demonstrate scenarios of pure uniqueness, redundancy, and synergy to help us understand the decomposition. PID serves as a tool to highlight regions of agreement and disagreement between these fairness definitions.  In contrast, traditional fairness metrics lack the granularity to capture these nuanced interactions, making PID an essential asset for a more comprehensive understanding and mitigation of disparities. \\

We can go beyond the impossibility between the three fairness definitions and further analyze their interrelationships.

\begin{restatable}[Revisiting Impossibility]{theorem}{revisitingimposibility}\label{thm:imp}
If $\mut{Z}{\Yh,Y} {>} 0$, at least one of the  PID terms, namely, $\uniZ$, $\redZ$, $\synZ$, or $\uniZY$ will be nonnegative. Hence, at least one of the fairness measures, namely, the Statistical Parity Gap ($\SP$), Equalized Odds Gap ($\EO$), or Predictive Parity Gap ($\PP$) will be nonzero. Conversely, all these unfairness measures will be zero if and only if $\mut{Z}{\Yh,Y} = 0$.
\end{restatable}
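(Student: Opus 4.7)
The plan is to derive Theorem~\ref{thm:imp} as a direct algebraic consequence of Proposition~\ref{prop:decomposition} and the nonnegativity of all four PID atoms. Concretely, I would first invoke the full PID decomposition $\mut{Z}{\Yh,Y} = \uniZ + \uniZY + \redZ + \synZ$, which follows from \eqref{eq:pid} applied with $A=\Yh$ and $B=Y$. Since each of the four atoms is nonnegative, the hypothesis $\mut{Z}{\Yh,Y}>0$ forces at least one atom to be strictly positive.

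Next I would trace any strictly positive atom through Proposition~\ref{prop:decomposition} to show it makes at least one fairness gap nonzero. For the forward direction I would argue case by case: if $\uniZ>0$ then both $\SP$ and $\EO$ are strictly positive; if $\redZ>0$ then $\SP>0$; if $\synZ>0$ then both $\EO$ and $\PP$ are strictly positive; and if $\uniZY>0$ then $\PP>0$. In every case, at least one of the three unfairness measures is nonzero, establishing the first claim.

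For the converse, I would suppose that $\SP=\EO=\PP=0$ and use Proposition~\ref{prop:decomposition} together with nonnegativity to conclude that each PID atom vanishes: $\mut{Z}{\Yh}=0$ gives $\uniZ=\redZ=0$, $\mut{Z}{\Yh|Y}=0$ gives $\uniZ=\synZ=0$, and $\mut{Z}{Y|\Yh}=0$ gives $\uniZY=\synZ=0$. Summing these via the PID decomposition yields $\mut{Z}{\Yh,Y}=0$. The reverse implication is even simpler: if $\mut{Z}{\Yh,Y}=0$, then by nonnegativity all four atoms vanish, and Proposition~\ref{prop:decomposition} immediately gives $\SP=\EO=\PP=0$.

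There is no real obstacle here; the entire theorem reduces to bookkeeping over the Venn diagram in Fig.~\ref{fig:circle} once Proposition~\ref{prop:decomposition} is in hand. The only subtlety worth stating explicitly is the nonnegativity of each PID atom, which is fundamental to PID and underpins both the case analysis and the converse. I would emphasize this point in the write-up since it is precisely what elevates the classical impossibility result to the refined statement that $\mut{Z}{\Yh,Y}$ is the tight joint quantity governing all three fairness gaps simultaneously.
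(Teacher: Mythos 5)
Your proposal is correct and follows essentially the same route as the paper's proof: decompose $\mut{Z}{\Yh,Y}$ into the four nonnegative PID atoms, force at least one to be strictly positive, trace it through Proposition~\ref{prop:decomposition} by cases, and run the converse by showing all atoms vanish when the three gaps do. The only difference is that the paper's appendix additionally proves the nonnegativity of the atoms for the specific definition used---in particular the redundancy term, via the coupling $Q_0$ under which $\Yh$ and $Y$ are independent given $Z$---whereas you cite nonnegativity as a known property of PID; that is a reasonable shortcut, though the redundancy term is the one atom whose nonnegativity is not immediate and is the step the paper spells out.
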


\noindent \textit{Proof Sketch:} 
The proof relies on the nonnegativity of each of the PID terms (also recall Fig.~\ref{fig:circle}). PID of $\mut{Z}{\Yh,Y}$ is expressed as $\mut{Z}{\Yh,Y} = \uniZ + \uniZY + \redZ + \synZ$.
Since each component in this decomposition is nonnegative, the presence of mutual information ($\mut{Z}{\Yh,Y} > 0$) implies that at least one of these terms is nonzero. According to Proposition~\ref{prop:decomposition}, each of these PID terms influences at least one unfairness measure. Therefore, the nonnegativity of any one of these terms results in at least one of the unfairness measures being nonzero. \qed

This is a general result from which one can also derive the impossibility of the three fairness definitions under specific conditions. Our next result examines the unfairness measures only when $\mut{Z}{Y} >0$. It is important to note that $\mut{Z}{Y}$ is an inherent characteristic of the dataset alone and hence it is independent of the model predictions.
\begin{restatable}[Dataset Dependent Relationships]{theorem}{DatasetDependentRelationships}
If $\mut{Z}{Y} {>} 0$, either the Statistical Parity Gap $\SP$ or the Predictive Parity Gap $\PP$ must be greater than zero.
\end{restatable}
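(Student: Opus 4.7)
The plan is to use the PID decomposition of $\mut{Z}{Y}$ itself and then dispatch the two possible cases via Proposition~\ref{prop:decomposition}. Recall the identity highlighted right after Fig.~\ref{fig:sim} in the background section: for any triple of random variables, $\mut{Z}{A} = \uni{Z}{A|B} + \rd{Z}{A,B}$. Applying this with $A = Y$ and $B = \hat{Y}$ gives the key decomposition
\begin{equation}
    \mut{Z}{Y} \;=\; \uniZY \;+\; \redZ.
\end{equation}
Both summands on the right are nonnegative, so the hypothesis $\mut{Z}{Y} > 0$ forces at least one of them to be strictly positive.

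The second step is a two-case argument. If $\redZ > 0$, then by the first equation of Proposition~\ref{prop:decomposition}, $\SP = \uniZ + \redZ \geq \redZ > 0$, so the statistical parity gap is nonzero. If instead $\uniZY > 0$, then by the third equation of Proposition~\ref{prop:decomposition}, $\PP = \uniZY + \synZ \geq \uniZY > 0$, so the predictive parity gap is nonzero. Either way, at least one of $\SP$ or $\PP$ must exceed zero, which is exactly the claim.

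There is no real obstacle here; the only conceptual point is to notice that $\mut{Z}{Y}$ (which does not literally appear among the three decompositions in Proposition~\ref{prop:decomposition}) admits its own PID-style expression in terms of $\uniZY$ and $\redZ$, and that precisely these two terms feed into $\PP$ and $\SP$ respectively. As a sanity check, one can verify the same conclusion via the chain rule $\mut{Z}{\hat{Y},Y} = \SP + \PP = \mut{Z}{Y} + \EO$, which immediately yields $\SP + \PP \geq \mut{Z}{Y} > 0$; however, presenting the PID-based argument keeps the exposition consistent with the framework developed in Section~\ref{sec:fairdecomp} and makes the \emph{source} of each unfairness contribution transparent.
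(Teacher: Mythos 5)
Your proof is correct and follows essentially the same route as the paper's: decompose $\mut{Z}{Y} = \uniZY + \redZ$, use nonnegativity to conclude one summand is strictly positive, and map $\redZ$ into $\SP$ and $\uniZY$ into $\PP$ via Proposition~\ref{prop:decomposition}. The chain-rule sanity check you mention is a nice extra but not part of the paper's argument; nothing is missing.
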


\noindent \textit{Proof Sketch:} 
The proof relies on demonstrating that the mutual information between $Z$ and $Y$ can be expressed as:
    \begin{equation}
       \mut{Z}{Y} = \uni{Z}{Y|\Yh}+\rd{Z}{Y,\Yh}. 
    \end{equation}     
Though, the PID terms $\uni{Z}{Y|\Yh}$ and $\rd{Z}{Y,\Yh}$ may vary based on the model chosen, their sum remains constant, reflecting the fixed nature of the mutual information between $Z$ and $Y$ in the dataset. Notably, $\uni{Z}{Y|\Yh}$ contributes to the predictive parity gap, and $\rd{Z}{Y,\Yh}$ contributes to the statistical parity gap (recall Fig.~\ref{fig:circle}). \qed

\begin{figure}
\centering
\includegraphics[width=0.7\textwidth]{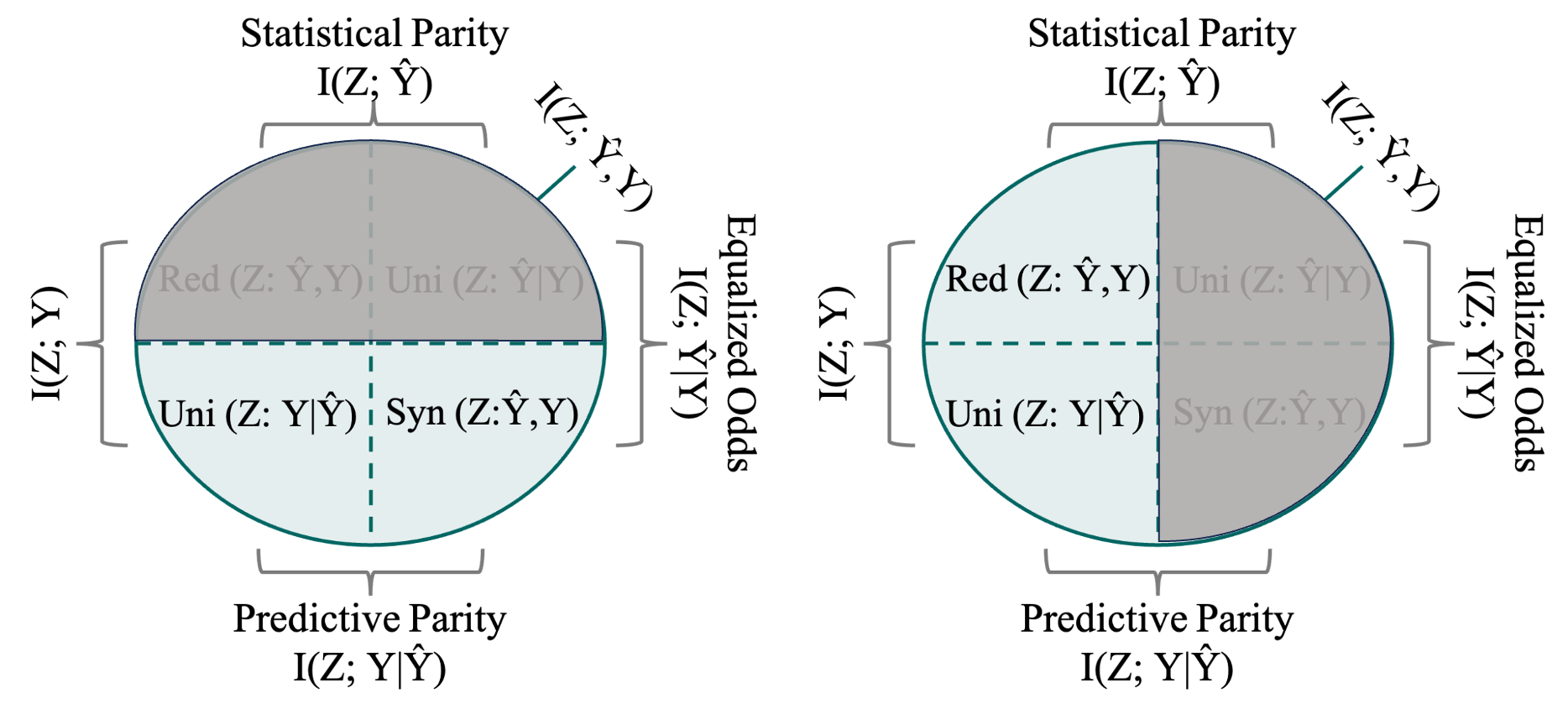}
  \caption{
(\emph{left}) Illustrates Theorem \ref{thm:zero_SP}, showing that when Statistical Parity is satisfied, the Predictive Parity gap is greater than or equal to the Equalized Odds gap, and if \(\mut{Z}{Y} {=} 0\), then \(\PP {=} \EO\).  (\emph{right}) visualizes Theorem \ref{thm:EOtradeoff} illustrating that when Equalized Odds is satisfied and \(\mut{Z}{Y} {>} 0\), there is an inverse relationship (tradeoff) between Statistical Parity and Predictive Parity (\(\SP {=} \mut{Z}{Y} {-} \PP\)) since \(\mut{Z}{Y}\) is fixed.
  \label{fig:theorem_viz}}
\end{figure}

\section{Tradeoffs Between Unfairness Measures}\label{sec:tradeoff}
In this section, we delineate the fundamental limits and tradeoffs between various unfairness measures. 
Our findings underscore the intricate and sometimes conflicting nature of different fairness objectives in algorithmic decision-making. 
Examining fairness through the lens of PID uncovers the nuanced interplay between different unfairness measures.

First, we will explore scenarios where models are trained with a focus on achieving any one specific fairness criterion and analyze its implications on the other two fairness notions. This applies to models that have been trained to achieve fairness either through in-processing techniques, such as adding fairness regularizers to the loss function, or through post-processing methods that adjust model outputs after training.


\begin{restatable}{theorem}{tradeoffone}\label{thm:zero_SP}
If Statistical Parity is satisfied, i.e., \(\SP = 0\), then the Predictive Parity Gap is greater than the Equalized Odds Gap, i.e., \(\PP \geq \EO\). Additionally, if the dataset is such that \(\mut{Z}{Y} = 0\), then Predictive Parity and Equalized Odds are equivalent, i.e., \(\PP = \EO\).
\end{restatable}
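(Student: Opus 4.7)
The plan is to exploit the PID decomposition of Proposition~\ref{prop:decomposition} together with the nonnegativity of every PID term. The key observation is that $\SP = \uniZ + \redZ$ is a sum of two \emph{nonnegative} quantities, so the hypothesis $\SP = 0$ forces each component to vanish individually: $\uniZ = 0$ and $\redZ = 0$. This single mechanism drives the entire theorem; once isolated, the rest is straightforward algebra on the PID expressions.

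For the first inequality, I would substitute $\uniZ = 0$ into the decompositions of Proposition~\ref{prop:decomposition} to obtain $\EO = \synZ$, while $\PP = \uniZY + \synZ$ is unchanged. Subtracting gives $\PP - \EO = \uniZY$, and the nonnegativity of unique information immediately yields $\PP \geq \EO$. For the additional equivalence, I would invoke the identity $\mut{Z}{Y} = \uniZY + \redZ$, which is the $(Y, \hat{Y})$ analogue of the general PID relation $\mut{Z}{A} = \uni{Z}{A|B} + \rd{Z}{A,B}$ stated in Section~\ref{prel} and used explicitly in the proof sketch of the preceding theorem. Under the hypothesis $\mut{Z}{Y} = 0$, nonnegativity again forces $\uniZY = 0$. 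Combined with $\uniZ = 0$ from the first part, both $\EO$ and $\PP$ collapse to $\synZ$, giving $\PP = \EO$.

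The main (and fairly minor) obstacle is a pair of bookkeeping points rather than any deep argument: first, emphasizing that $\SP = 0$ annihilates \emph{both} constituent PID terms and not merely their sum (this is where nonnegativity is essential); and second, citing the identity $\mut{Z}{Y} = \uniZY + \redZ$ cleanly from the material in the Preliminaries so that the second half of the theorem does not appear to rely on anything external. Beyond these two observations, the statement is a direct corollary of Proposition~\ref{prop:decomposition} and the nonnegativity of PID components, with no substantive obstacle.
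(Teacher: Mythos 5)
Your proposal is correct and follows essentially the same route as the paper's own proof: both use the nonnegativity of the PID terms to conclude $\uniZ = \redZ = 0$ from $\SP = 0$, reduce $\EO$ to $\synZ$ while $\PP = \uniZY + \synZ$, and then invoke $\mut{Z}{Y} = \uniZY + \redZ$ to force $\uniZY = 0$ in the second part. No gaps.
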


\noindent \textit{Proof Sketch:} We refer to Fig.~\ref{fig:circle} for an intuitive understanding of the proof. Given that Statistical Parity is zero, we have \(\SP = \uniZ + \redZ = 0\). Since all PID terms are non-negative, it follows that individually $\uniZ=0$ and $\redZ = 0$. Consequently, the Equalized Odds gap simplifies to \(\EO = \uniZ + \synZ = \synZ\). On the other hand, the Predictive Parity gap is \(\PP = \uniZY + \synZ\). Since all PID terms are nonnegative, it follows that \(\PP \geq \EO\).  

Furthermore, when \(\mut{Z}{Y} = 0\), it results in \(\uniZY + \redZ = 0\), leading to each of those individual terms being zero, i.e., $\uniZY=0$ and $\redZ=0$. Therefore, \(\PP =  \synZ\ = \EO\) (see Fig.~\ref{fig:theorem_viz} for an illustration). \qed


Similar to Theorem \ref{thm:zero_SP}, one can also derive the relationship between the statistical parity gap and equalized odds gap when predictive parity is satisfied. 
\begin{restatable}{theorem}{tradeofftwo}\label{thm:zero_PP}
If Predictive Parity is satisfied, i.e., \(\PP {=} 0\), then the Statistical Parity Gap is greater than the Equalized Odds Gap, i.e., \(\SP \geq \EO\). Additionally, if the dataset is such that \(\mut{Z}{Y} = 0\), then Statistical Parity and Equalized Odds are equal, i.e., \(\PP = \EO\).
\end{restatable}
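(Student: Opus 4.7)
The plan is to mimic the argument of Theorem~\ref{thm:zero_SP} but swap the roles of statistical parity and predictive parity, relying once again on the nonnegativity of all four PID terms together with the decompositions from Proposition~\ref{prop:decomposition}. A quick glance at Fig.~\ref{fig:circle} suggests that the predictive parity gap and the equalized odds gap share the synergy region $\synZ$, while predictive parity additionally contains $\uniZY$; zeroing out predictive parity therefore zeros out both $\synZ$ and $\uniZY$, leaving the statistical parity gap with an ``extra'' nonnegative $\redZ$ term compared to the equalized odds gap.

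Concretely, I would proceed as follows. First, I would write down the three PID decompositions from Proposition~\ref{prop:decomposition}, namely $\SP=\uniZ+\redZ$, $\EO=\uniZ+\synZ$, and $\PP=\uniZY+\synZ$. Next, invoking the hypothesis $\PP=0$, I would use nonnegativity of PID terms to conclude that $\uniZY=0$ and $\synZ=0$ individually. Substituting $\synZ=0$ into the equalized odds decomposition reduces it to $\EO=\uniZ$, and since $\redZ\ge 0$, we get $\SP=\uniZ+\redZ\ge \uniZ=\EO$, which is the first claim.

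For the second part, I would bring in the decomposition of $\mut{Z}{Y}$ used in the earlier ``Dataset Dependent Relationships'' theorem, namely $\mut{Z}{Y}=\uniZY+\redZ$. Under the additional assumption $\mut{Z}{Y}=0$, nonnegativity again forces both $\uniZY=0$ and $\redZ=0$. Combined with the previously obtained $\synZ=0$, this collapses $\SP=\uniZ+\redZ=\uniZ$ and $\EO=\uniZ+\synZ=\uniZ$, giving $\SP=\EO$ (we read the statement's final equality as $\SP=\EO$, since $\PP=\EO$ is trivially $0=\EO$ only when $\EO=0$ as well).

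I do not anticipate any real obstacle: the entire argument is a chase through the Venn diagram in Fig.~\ref{fig:circle}, using only the three PID identities and the nonnegativity of the four atoms. The only subtle bookkeeping point is the step where a sum of nonnegative quantities equaling zero forces each summand to be zero, which is what lets a single equality such as $\PP=0$ or $\mut{Z}{Y}=0$ kill off two PID atoms at once; this is precisely the structural leverage that makes the PID viewpoint useful and that will be the only thing to state carefully.
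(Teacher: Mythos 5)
Your proposal is correct and follows essentially the same route as the paper's own proof: use $\PP=0$ plus nonnegativity to kill $\uniZY$ and $\synZ$, reduce $\EO$ to $\uniZ$, and compare with $\SP=\uniZ+\redZ$; then use $\mut{Z}{Y}=\uniZY+\redZ=0$ for the equality case. Your reading of the final equality as $\SP=\EO$ (rather than the statement's apparent typo $\PP=\EO$) matches what the paper's proof actually establishes.
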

\noindent Theorem \ref{thm:zero_SP} and \ref{thm:zero_PP} demonstrate scenarios where one unfairness measure dominates another and are in agreement, now we provide a third scenario where two measures of unfairness will be in disagreement.

\begin{restatable}{theorem}{tradeoffthree}\label{thm:EOtradeoff}
If Equalized Odds is satisfied, i.e., \(\EO {=} 0\) and $\mut{Z}{Y} >0$, an inverse relationship (tradeoff) exists between Statistical Parity and Predictive Parity, i.e., $\SP=\mut{Z}{Y}-\PP$. Thus, increasing one leads to a decrease in the other, and vice versa.
\end{restatable}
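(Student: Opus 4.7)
The plan is to exploit the nonnegativity of the four PID terms in conjunction with the decompositions from Proposition~\ref{prop:decomposition} and the identity $\mut{Z}{Y} = \uniZY + \redZ$ that was already used in the proof sketch of the preceding dataset-dependent theorem.

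First, I would observe that the equalized odds gap decomposes as $\EO = \uniZ + \synZ$. Since each PID term is nonnegative, the hypothesis $\EO = 0$ forces both $\uniZ = 0$ and $\synZ = 0$ individually. This collapsing of a sum of nonnegative quantities to zero is the pivotal step, and it is why the nonnegativity property of the PID decomposition is essential rather than merely convenient.

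Next, I would substitute these vanishing terms into the remaining two decompositions from Proposition~\ref{prop:decomposition}. The statistical parity gap reduces to $\SP = \uniZ + \redZ = \redZ$, and the predictive parity gap reduces to $\PP = \uniZY + \synZ = \uniZY$. Invoking the dataset-dependent identity $\mut{Z}{Y} = \uniZY + \redZ$, which holds independently of the model, I then obtain $\mut{Z}{Y} = \PP + \SP$, and rearranging yields the target equality $\SP = \mut{Z}{Y} - \PP$.

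Since $\mut{Z}{Y}$ is a fixed property of the dataset and is strictly positive by hypothesis, this identity expresses a genuine inverse relationship: any decrease in $\PP$ must be matched by an equal increase in $\SP$ and vice versa, with total ``budget'' $\mut{Z}{Y}$. The proof itself is essentially immediate once the nonnegativity collapse has been carried out. The only subtle point worth double-checking is the identity $\mut{Z}{Y} = \uniZY + \redZ$, which follows by applying the general PID relation $\mut{Z}{A} = \uni{Z}{A|B} + \rd{Z}{A,B}$ noted in the preliminaries with $A = Y$ and $B = \Yh$. Consequently, the main obstacle is more notational than conceptual: one has to keep track of which variable sits in which slot of the PID decomposition and make sure to use the mirrored version of the identity rather than the one written in Proposition~\ref{prop:decomposition}.
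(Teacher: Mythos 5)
Your proof is correct and follows essentially the same route as the paper: both collapse $\EO = \uniZ + \synZ = 0$ into the vanishing of each term via nonnegativity, reduce $\SP$ to $\redZ$ and $\PP$ to $\uniZY$, and then invoke the identity $\mut{Z}{Y} = \uniZY + \redZ$ to conclude $\mut{Z}{Y} = \SP + \PP$. Your added remark about deriving that identity from the general relation $\mut{Z}{A} = \uni{Z}{A|B} + \rd{Z}{A,B}$ with the roles of $Y$ and $\Yh$ swapped is a point the paper glosses over, but it is the same argument.
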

\noindent \textit{Proof Sketch}
Given that Equalized Odds is met, we have \(\EO = \uniZ + \synZ = 0\). Consequently, from nonnegativity, both the terms $\uniZ$ and $\synZ$ are 0. Statistical Parity gap simplifies to \(\SP = \uniZ + \redZ = \redZ\), and the Predictive Parity gap is expressed as \(\PP = \uniZY + \synZ = \uniZY\). Hence, $\mut{Z}{Y} = \uniZY + \redZ = \SP + \PP$. Since, $\mut{Z}{Y}$ is fixed for a dataset, an increase in the statistical parity gap leads to a decrease in the predictive parity gap, and vice versa (see Fig.~\ref{fig:theorem_viz} for an illustration). \qed

\section{Experimental Demonstrations}

\begin{table*}[!htbp]
\centering
\renewcommand{\arraystretch}{1.1}
\setlength{\tabcolsep}{18pt}
\caption{Results of Regularizers on Different Measures of Unfairness}
\label{tab}
\begin{tabular}{@{}lcccc@{}}
\toprule
\multirow{3}{*}{Regularizers} & \multicolumn{1}{l|}{}    & \multicolumn{2}{l|}{Equalized Odds $\EO$}                             &     \\ \cline{2-5} 
                            & \multicolumn{2}{l|}{Statistical Parity $\SP$}                             & \multicolumn{2}{l}{Predictive Parity $\PP$}        \\ \cline{2-5} 
                            & \multicolumn{1}{l|}{$\redZ$} & \multicolumn{1}{l|}{$\uniZ$} & \multicolumn{1}{l|}{$\synZ$} & $\uniZY$ \\ \hline

SP                    & 0.012                  & 0.000                   & 0.001                  & 0.024                       \\ 
PP                    & 0.026                  & 0.007                   & 0.008                  & 0.011                        \\ 
EO                    & 0.011                  & 0.000                   & 0.001                  & 0.026                       \\ 

EO, PP                & 0.000                  & 0.000                   & 0.000                  & 0.037                       \\ 
SP, PP                & 0.000                  & 0.000                   & 0.000                  & 0.037                       \\ 
SP, EO                & 0.008                  & 0.000                   & 0.000                  & 0.028                       \\ 
SP, EO, PP            & 0.000                  & 0.000                   & 0.000                  & 0.037                       \\ 
\bottomrule
\end{tabular}
\end{table*}

In this section, we provide an experimental demonstration on the Adult dataset~\cite{Adult} to validate our theoretical findings. The classification task for this dataset involves predicting whether an individual's income exceeds $50$K per year, using features such as occupation, education, etc. We use \textit{gender}  as a sensitive attribute.

We train a neural network consisting of a sequence of layers: the input layer is followed by three hidden layers, each with $32$ units and ReLU activation, and concludes with a single output layer using a sigmoid activation function. Training is conducted using a batch size of $512$, and the Adam optimizer with a learning rate of $0.01$. We apply various fairness regularizers and measure the unfairness as well as their decomposition (results are summarized in Table.\ref{tab}).
We use the \textit{dit} package~\cite{dit} for PID computation and \textit{FairTorch} \cite{fairtorch} for fairness regularizer implementation.

 A key observation in our analysis is that  \(\mut{Z}{Y}\) consistently measures 0.037 using the Adult dataset.  This mass does not decrease across various models since it only depends on the dataset. The PID terms in $\mut{Z}{Y}$, i.e., $\uniZY$ and $\redZ$ contribute to either predictive parity or statistical parity gap. 
 When statistical parity is achieved (scenario with SP regularizer), the predictive parity gap is greater than the equalized odds gap. Also due to the impossibility of attaining zero unfairness with all the measures (see scenario with SP, EO, and PP regularizers), the mass typically moves to $\uniZY$, contributing to the predictive parity. The experiments confirm the theoretical insights by illustrating the inherent trade-offs between fairness measures. Specifically, even with fairness regularizers, achieving zero unfairness across all metrics is impossible due to the fixed information content in the dataset, i.e., $\mut{Y}{Z}$. The movement of unfairness mass between statistical parity and predictive parity highlights the difficulty of balancing fairness across multiple criteria, reinforcing the necessity of carefully considering these trade-offs in real-world applications.

\section{Conclusion}
By introducing this unifying framework, we provide a tool for gaining a more nuanced understanding of the interplay between different unfairness measures, thereby improving the decision-making and deployment of fair machine learning systems. Our work holds broader implications in fields such as fairness auditing~\cite{hamman2023can}, explainability~\cite{zhou2020towards,hamman2023robust,hammanjournal24}, policy regulation~\cite{WhiteHouse2022AIBill,Barocas2016BigDD}, where it can significantly contribute to the evaluation and understanding of unfairness in machine learning models. This work not only furthers the theoretical discourse but would also have significant societal implications, guiding the trajectory toward more responsible and equitable machine learning in high-stakes settings. Future work could explore efficient methods to estimate PID~\cite{venkatesh2023gaussian,goswami2023computinguniqueinformationpoisson,pakman2021estimating, kleinman2021redundant, halder2024quantifying} (also see \cite{dutta2023review} for more discussion on PID estimation). 









%

\bibliographystyle{unsrt}
\bibliography{PID}

\appendix
\section{Proofs for Theorems in Section~\ref{sec:fairdecomp}}
\revisitingimposibility*
\begin{proof}

For completeness, we first show the non-negativity of PID terms for the PID definition that we are using in this work (also see \cite{bertschinger_QUI}):

$\uni{Z}{\hat{Y}|Y} = \min_{Q \in \Delta_p} \mutd{Q}{Z}{\hat{Y}|Y}$ is non-negative since the conditional mutual information is non-negative by definition.

Similarly argument holds for $\uni{Z}{Y|\hat{Y}}$.

$\syn{Z}{\hat{Y},Y} = \mut{Z}{\hat{Y}|Y} - \min_{Q \in \Delta_p} \mutd{Q}{Z}{\hat{Y}|Y} 
    \geq \mut{Z}{\hat{Y}|Y} - \mut{Z}{\hat{Y}|Y} 
    = 0.$

The  Redundant information: 
\begin{align*}
    \rd{Z}{\hat{Y},Y} &= \mut{Z}{\hat{Y}}- \min _{Q \in \Delta_p} \mathrm{I}_Q(Z ; \hat{Y} | S) = \max_{Q \in \Delta_p} \mutd{Q}{\hat{Y}}{Z} - \mutd{Q}{Z} {\hat{Y}|Y}
\end{align*}
First equality holds by definition of redundant information. Second equality holds  since   marginals on $(\hat{Y},Z)$ is fixed in $\Delta_p$, hence, $\max_{Q \in \Delta_p} \mutd{Q}{\hat{Y}}{Z} = \mut{\hat{Y}}{Z} $.

To prove non-negativity of redundant information, we construct a distribution $Q_0$ such that:
\begin{align*}
    \Pr_{Q_0}(Z=z,\hat{Y}=y,Y=y) = \frac{\Pr(Z=z,\hat{Y}=y)\Pr(Z=z,Y=y)}{\Pr(Z=z)} 
\end{align*}

Next, we show $Q_0 \in \Delta_p$. Recall the set \(\Delta_p\) in Definition \ref{def:bert_def}:
\begin{align*}
    \Delta_p = \{Q \in \Delta: \Pr_{Q}(Z=z,\hat{Y}=y) = \Pr(Z=z, \hat{Y}=y), \Pr_{Q}(Z=z, Y=y) = \Pr(Z=z, Y=y)\}.
\end{align*}

\begin{align*}
    \Pr_{Q_0}(Z=z,\hat{Y}=y) &=  \sum_y \Pr_{Q_0}(Z=z,\hat{Y}=y,Y=y)=  \sum_y \frac{\Pr(Z=z,\hat{Y}=y)}{\Pr(Z=z)}\Pr(Z=z,Y=y) \\
    &= \frac{\Pr(Z=z,\hat{Y}=y)}{\Pr(Z=z)}   \sum_y \Pr(Z=z,Y=y) = \Pr(Z=z,\hat{Y}=y).
\end{align*}
\begin{align*}
    \Pr_{Q_0}(Z=z,Y=y) &=  \sum_{\hat{y}} \Pr_{Q_0}(Z=z,\hat{Y}=y,Y=y)=  \sum_{\hat{y}} \frac{\Pr(Z=z,\hat{Y}=y)\Pr(Z=z,Y=y)}{\Pr(Z=z)} \\
    &= \frac{\Pr(Z=z,Y=y)}{\Pr(Z=z)}   \sum_{\hat{y}} \Pr(Z=z,\hat{Y}=y) = \Pr(Z=z,Y=y). 
\end{align*}

Marginals of $Q_0$ satisfy conditions on set $\Delta_p$, hence $Q_0\in \Delta_p$. Also, note that by construction of $Q_0$, $\hat{Y}$ and $Y$ are independent conditioned on $Z$, i.e., $\mutd{Q_0}{\hat{Y}}{Y|Z} = 0$. Hence, we have:
\begin{align*}
    \rd{Z}{\hat{Y},Y} & \stackrel{(a)}{=} \max_{Q \in \Delta_p} \mutd{Q}{Z}{\hat{Y}} - \mutd{Q}{Z}{\hat{Y}|Y} \\
    & \stackrel{(b)}{\geq} \mutd{Q_0}{Z}{\hat{Y}} - \mutd{Q_0}{Z}{\hat{Y}|Y}\\
    & \stackrel{(c)}{=}  H_{Q_0}(Z)+H_{Q_0}(\hat{Y}) - H_{Q_0}(Z,\hat{Y}) - H_{Q_0}(Z|Y) - H_{Q_0}(\hat{Y}|Y)+H_{Q_0}(Z,\hat{Y}|Y)\\
    & \stackrel{(d)}{=} \mutd{Q_0}{\hat{Y}}{Y} - \mutd{Q_0}{\hat{Y}}{Y|Z} \\
    &\stackrel{(e)}{=} \mutd{Q_0}{\hat{Y}}{Y} \stackrel{(f)}{\geq} 0.
\end{align*}
Here, (\textit{a}) hold from definition of $\rd{Z}{\hat{Y},Y}$, (\textit{b}) hold since $Q_0 \in \Delta_p$, (\textit{c})-(\textit{d}) holds from expressing mutual information in terms of entropy, (\textit{e}) hold since $\mutd{Q_0}{\hat{Y}}{S|Z} = 0$, (\textit{f}) holds from non-negativity property of mutual information.


Since mutual information $\mut{Z}{\Yh,Y} > 0$, we can use the PID framework to decompose this mutual information as:
\begin{align*}
\mut{Z}{\Yh,Y} = \uniZ + \uniZY + \redZ + \synZ.
\end{align*}

Since each of the PID components $\uniZ$, $\uniZY$, $\redZ$, and $\synZ$ is nonnegative, the condition $\mut{Z}{\Yh,Y} > 0$ implies that at least one of these terms must be strictly positive.

Next, according to Proposition~\ref{prop:decomposition}:
\begin{align*}
\SP & = \uni{Z}{\Yh|Y} + \rd{Z}{\Yh, Y}, \\
\EO & = \uni{Z}{\Yh|Y} + \syn{Z}{\Yh,Y}, \\
\PP & = \uni{Z}{Y|\Yh} + \syn{Z}{\Yh,Y}.
\end{align*}

Given that $\mut{Z}{\Yh,Y} > 0$ ensures that at least one of $\uniZ$, $\redZ$, $\synZ$, or $\uniZY$ is nonnegative, it follows that:
- If $\uni{Z}{\Yh|Y} > 0$, then $\SP > 0$.
- If $\syn{Z}{\Yh,Y} > 0$, then $\EO > 0$ or $\PP > 0$.
- If $\rd{Z}{\Yh, Y} > 0$, then $\SP > 0$.
- If $\uni{Z}{Y|\Yh} > 0$, then $\PP > 0$.

Therefore, the presence of positive mutual information $\mut{Z}{\Yh,Y}$ guarantees that at least one of the fairness measures ($\SP$, $\EO$, $\PP$) will be nonzero.

Conversely, if all these unfairness measures are zero, then by the definitions given in Proposition~\ref{prop:decomposition} and non-negativity of PID terms, all the PID terms must be zero, which implies $\mut{Z}{\Yh,Y} = 0$.
\end{proof}

\DatasetDependentRelationships*

\begin{proof}
We begin by expressing the mutual information between $Z$ and $Y$ using PID:
\begin{equation}
\mut{Z}{Y} = \uni{Z}{Y|\Yh} + \rd{Z}{Y,\Yh}.
\end{equation}

Next, we examine the contributions of these PID terms to the fairness measures: The term $\uni{Z}{Y|\Yh}$ contributes to the Predictive Parity Gap ($\PP$).  The term $\rd{Z}{Y,\Yh}$ contributes to the Statistical Parity Gap ($\SP$).

Given that $\mut{Z}{Y} > 0$, it follows that the sum of $\uni{Z}{Y|\Yh}$ and $\rd{Z}{Y,\Yh}$ is positive, i.e.,
$\mut{Z}{Y} = \uni{Z}{Y|\Yh} + \rd{Z}{Y,\Yh} > 0.$

Since both $\uni{Z}{Y|\Yh}$ and $\rd{Z}{Y,\Yh}$ are nonnegative, the fact that their sum is positive implies that at least one of them must be strictly positive. Thus, we have two cases to consider:

1. If $\uni{Z}{Y|\Yh} > 0$, then the Predictive Parity Gap ($\PP$) must be greater than zero.

2. If $\rd{Z}{Y,\Yh} > 0$, then the Statistical Parity Gap ($\SP$) must be greater than zero.

Therefore, if $\mut{Z}{Y} > 0$, it is guaranteed that either $\SP > 0$ or $\PP > 0$.
\end{proof}

\section{Proofs for Theorems in Section~\ref{sec:tradeoff}}
\tradeoffone*

\begin{proof}
Given that Statistical Parity is zero (\(\SP = 0\)), we have:
\begin{align}
\SP &= \uni{Z}{\Yh|Y} + \rd{Z}{\Yh, Y} = 0.
\end{align}
Since all PID terms are non-negative, this implies:
\begin{align}
\uni{Z}{\Yh|Y} = 0 \quad \text{and} \quad \rd{Z}{\Yh, Y} = 0.
\end{align}

The Equalized Odds gap (\(\EO\)) and Predictive Parity gap (\(\PP\)) can be expressed as:
\begin{align}
\EO &= \uni{Z}{\Yh|Y} + \syn{Z}{\Yh,Y} = \syn{Z}{\Yh,Y}, \\
\PP &= \uni{Z}{Y|\Yh} + \syn{Z}{\Yh,Y}.
\end{align}
Since \(\uni{Z}{\Yh|Y} = 0\), we have:
\begin{align}
\PP = \uni{Z}{Y|\Yh} + \syn{Z}{\Yh,Y} \geq \syn{Z}{\Yh,Y} = \EO.
\end{align}

Furthermore, if \(\mut{Z}{Y} = 0\), we have:
\begin{align}
\mut{Z}{Y} = \uni{Z}{Y|\Yh} + \rd{Z}{\Yh, Y} = 0.
\end{align}
This implies:
\begin{align}
\uni{Z}{Y|\Yh} = 0 \quad \text{and} \quad \rd{Z}{\Yh, Y} = 0.
\end{align}
Therefore:
\begin{align}
\PP = \uni{Z}{Y|\Yh} + \syn{Z}{\Yh,Y} = \syn{Z}{\Yh,Y} = \EO.
\end{align}

Thus, we have: \(\PP \geq \EO\) when \(\SP = 0\), and \(\PP = \EO\) when \(\mut{Z}{Y} = 0\). 
\end{proof}

\tradeofftwo*

\begin{proof}
Given that Predictive Parity is satisfied (\(\PP = 0\)), we have:
\begin{align}
\PP = \uni{Z}{Y|\Yh} + \syn{Z}{\Yh,Y} = 0.
\end{align}
Since all PID terms are non-negative, this implies:
\begin{align}
\uni{Z}{Y|\Yh} = 0 \quad \text{and} \quad \syn{Z}{\Yh,Y} = 0.
\end{align}

The Statistical Parity gap (\(\SP\)) and Equalized Odds gap (\(\EO\)) can be expressed as:
\begin{align}
\SP &= \uni{Z}{\Yh|Y} + \rd{Z}{\Yh, Y}, \\
\EO &= \uni{Z}{\Yh|Y} + \syn{Z}{\Yh,Y} = \uni{Z}{\Yh|Y}.
\end{align}

Since \(\uni{Z}{Y|\Yh} = 0\), we have:
\begin{align}
\SP = \uni{Z}{\Yh|Y} + \redZ \geq \uni{Z}{\Yh|Y} = \EO.
\end{align}

Thus, \(\SP \geq \EO\) when \(\PP = 0\).

Furthermore, if \(\mut{Z}{Y} = 0\), we have:
\begin{align}
\mut{Z}{Y} = \uni{Z}{Y|\Yh} + \rd{Z}{\Yh, Y} = 0.
\end{align}
This implies:
\begin{align}
\uni{Z}{Y|\Yh} = 0 \quad \text{and} \quad \rd{Z}{\Yh, Y} = 0.
\end{align}
Therefore: $\SP = \uniZ \quad \text{and} \quad \EO = \uniZ.$

Hence, when \(\mut{Z}{Y} = 0\), and $\PP = 0$, we have: \(\SP = \EO\).
\end{proof}

\tradeoffthree*

\begin{proof}
Given that Equalized Odds is satisfied (\(\EO = 0\)), we have:
\begin{align}
\EO = \uni{Z}{\Yh|Y} + \syn{Z}{\Yh,Y} = 0.
\end{align}
Since all PID terms are non-negative, it follows that:
\begin{align}
\uni{Z}{\Yh|Y} = 0 \quad \text{and} \quad \syn{Z}{\Yh,Y} = 0.
\end{align}

The Statistical Parity gap (\(\SP\)) and Predictive Parity gap (\(\PP\)) can now be expressed as:
\begin{align}
\SP &= \uni{Z}{\Yh|Y} + \rd{Z}{\Yh, Y} = \rd{Z}{\Yh, Y}, \\
\PP &= \uni{Z}{Y|\Yh} + \syn{Z}{\Yh,Y} = \uni{Z}{Y|\Yh}.
\end{align}

Given that \(\mut{Z}{Y} > 0\), we can write the mutual information between \(Z\) and \(Y\) as:
\begin{align}
\mut{Z}{Y} = \uni{Z}{Y|\Yh} + \rd{Z}{\Yh, Y}.
\end{align}

Substituting \(\PP\) and \(\SP\) into this equation, we get:
\begin{align}
\mut{Z}{Y} = \PP + \SP.
\end{align}

Since \(\mut{Z}{Y}\) is fixed for a given dataset, this equation demonstrates that \(\SP\) and \(\PP\) have an inverse relationship:
\begin{align}
\SP = \mut{Z}{Y} - \PP.
\end{align}

Therefore, increasing the Statistical Parity gap (\(\SP\)) will lead to a decrease in the Predictive Parity gap (\(\PP\)), and vice versa.
\end{proof}









\end{document}